\newcommand{\be}{\begin{equation}}
\newcommand{\ee}{\end{equation}}
\newcommand{\beq} {\begin{equation}}
\newcommand{\eeq} {\end{equation}}
\newcommand{\ba}{\begin{eqnarray}}
\newcommand{\ea}{\end{eqnarray}}
\newtheorem{theo}{Theorem}
\begin{document}

	\title{Solving Linear Tensor Equations}
	
	\author{Damianos Iosifidis}
	\affiliation{Institute of Theoretical Physics, Department of Physics
		Aristotle University of Thessaloniki, 54124 Thessaloniki, Greece}
	\email{diosifid@auth.gr}
	
	\date{\today}
	\begin{abstract}
		
		We develop a systematic way to solve linear equations involving tensors of arbitrary rank. We start off with the case of a rank $3$ tensor, which appears in many applications, and after finding the condition for a unique solution we derive this solution. Subsequently we generalize our result to tensors of arbitrary rank. Finally we consider a generalized version of the former case of rank $3$ tensors and  extend the result when the tensor traces  are also included.

	\end{abstract}
	
	\maketitle
	
	\allowdisplaybreaks
	
	
	
	\section{Introduction}
	In many applications a tensorial equation of the form
		\beq
		a_{1}N_{\alpha\mu\nu}+a_{2}N_{\nu\alpha\mu}+a_{3}N_{\mu\nu\alpha}+a_{4}N_{\alpha\nu\mu}+a_{5}N_{\nu\mu\alpha}+a_{6}N_{\mu\alpha\nu}=B_{\alpha\mu\nu} \label{eq10}
\eeq
appears, where $B_{\alpha\mu\nu}$ is some given (i.e. known)	tensor field, $a_{i}'s$, $i=1,2,...,6$ are some given scalar fields and $N_{\alpha\mu\nu}$ is the unknown tensor field one wishes to solve for. For instance this tensorial equation is encountered when one varies the quadratic Metric-Affine Gravity action  \cite{iosifidis2019metric,iosifidis2019exactly,hehl1995metric} with respect to the affine connection\footnote{Third order tensors appear also in mechanics, see for instance \cite{qi2018third}.}. There, $B_{\alpha\mu\nu}$ represents the (known) hypermomentum source and $N_{\alpha\mu\nu}$ is the distortion tensor \cite{schouten1954ricci} in which spacetime torsion and non-metricity are encoded. Then having solved for $N_{\alpha\mu\nu}$ entirely in terms of the sources, that is combinations of $B_{\alpha\mu\nu}$ one can easily obtain the forms of torsion and non-metricity, namely the non-Riemannian \cite{eisenhart2012non} parts of the geometry. In order to solve this equation one could go about and split $N_{\alpha\mu\nu}$ into its irreducible decomposition and then take contractions, symmetrizations etc. in order to find the various pieces in terms of $B_{\alpha\mu\nu}$ and its contractions. Even though this may work in some cases, it will be a difficult task in general. Moreover, this procedure will fall short quickly if one wishes to generalize the above considerations and ask for the general solution ($N$ in terms of $B$) of the rank-n tensorial equation
		\beq
\underbrace{	a_{1}N_{\mu_{1}\mu_{2}\mu_{3}...\mu_{n}}+a_{2}N_{\mu_{n}\mu_{1}\mu_{2}...\mu_{n-1}}+...+a_{n}N_{\mu_{2},\mu_{3},...\mu_{n}\mu_{1}}+....}_{n!-terms}=B_{\mu_{1}\mu_{2}\mu_{3}...\mu_{n}} \label{eqn}
	\eeq
	\label{intro}
 Evidently, one easily realizes that it would be impossible to solve the latter by resorting to some decomposition scheme for $N$\footnote{For decompositions of  rank-$3$ tensors see \cite{itin2020decomposition} and for there geometric picture one may consult \cite{auffray2013geometrical}. In addition, a nice review on tensor calculus can be found in \cite{landsberg2012tensors}. }. It is then natural to ask, is there a systematic and practical way to solve equations of the form $(\ref{eq10})$ or more generally ($\ref{eqn}$). It is the purpose of this letter to answer this question. As we show, under a fairly general  non-degeneracy condition it is always possible to find the unique solution of ($\ref{eq10}$), or more generally of ($\ref{eqn}$), by following a certain procedure that we develop below along with some extensions/generalizations.

	\section{The Theorems}
	In what follows we present $3$ Theorems.In the first one, the systematic way to solve equation ($\ref{eq10}$) for $N$ is proved. We then extend this result to tensors $N$ of arbitrary rank (i.e not necessary $3$) and solve equations of the form ($\ref{eqn}$). Finally we derive the solution of a generalized version of $(\ref{eq10})$ where the traces of $N$ are also included. We have the following.

	\begin{theo}
		Consider the tensor equation
		\beq
		a_{1}N_{\alpha\mu\nu}+a_{2}N_{\nu\alpha\mu}+a_{3}N_{\mu\nu\alpha}+a_{4}N_{\alpha\nu\mu}+a_{5}N_{\nu\mu\alpha}+a_{6}N_{\mu\alpha\nu}=B_{\alpha\mu\nu} \label{eq1}
				\eeq
				where  $a_{i}$, $i=1,2,...,6$ are scalars, $B_{\alpha\mu\nu}$   is a given (known) tensor and $N_{\alpha\mu\nu}$ are the components of the unknown tensor\footnote{Of course the result holds true even when $N_{\alpha\mu\nu}$ are the components of a tensor density instead or even of a connection given that $B_{\alpha\mu\nu}$ are also  of the same kind. } $N$. Define the matrix
				 \begin{equation}
				A := 
				\begin{pmatrix}
				a_{1} & a_{2} & a_{3} & a_{4} & a_{5} & a_{6} \\
				a_{2} & a_{3} &  a_{1} & a_{5} & a_{6} & a_{4} \\
				a_{3} & a_{1} &  a_{2} & a_{6} & a_{4} & a_{5} \\
				a_{6} & a_{5} &  a_{4} & a_{3} & a_{2} & a_{1} \\
				a_{5} & a_{4} &  a_{6} & a_{2} & a_{1} & a_{3} \\
				a_{4} & a_{6} &  a_{5} & a_{1} & a_{3} & a_{2} \\
				\end{pmatrix} \label{A}
				\end{equation}
				If the system is non-degenerate, that is if	
				 \footnote{ A necessary condition for this to happen is that	$
				 	\sum_{i=1}^{6}a_{i}\neq 0 	$. However, this condition alone is not sufficient since the latter quantity can be non-vanishing but it may be so that the full determinant still vanishes. See Appendix for more details on this feature.}
				\beq
				det(A) \neq 0
				\eeq
		holds true, then the general and unique solution of ($\ref{eq1}$) reads
				\beq
			N_{\alpha\mu\nu}=\tilde{a}_{11}	B_{\alpha\mu\nu}+\tilde{a}_{12}B_{\nu\alpha\mu}+\tilde{a}_{13}B_{\mu\nu\alpha}+\tilde{a}_{14}B_{\alpha\nu\mu}
			+\tilde{a}_{15} B_{\nu\mu\alpha}+\tilde{a}_{16} B_{\mu\alpha\nu} \label{co}
			\eeq
				where the $\tilde{a}_{1i}$ $  's$ are the first row elements of the inverse matrix $A^{-1}$.
	\end{theo}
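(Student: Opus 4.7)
The plan is to view the left-hand side of (\ref{eq1}) as a linear combination, indexed by the symmetric group $S_{3}$, of the six index-permutations of $N$, namely $N_{\alpha\mu\nu},\,N_{\nu\alpha\mu},\,N_{\mu\nu\alpha},\,N_{\alpha\nu\mu},\,N_{\nu\mu\alpha},\,N_{\mu\alpha\nu}$, which constitute the full $S_{3}$-orbit of $N$ through permutation of its three index slots. The crucial observation is that relabelling the free indices $(\alpha,\mu,\nu)$ by each of the six elements $\tau\in S_{3}$ produces five additional valid equations, each of which is again a linear combination of the \emph{same} six $N$-terms---merely reshuffled among themselves, together with a corresponding reshuffle of the coefficients $a_{i}$---set equal to the corresponding index-permutation of $B$.

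First I would collect these six equations into a single matrix system $A\,\vec{N}=\vec{B}$, where $\vec{N}$ and $\vec{B}$ are column vectors holding the six index-permutations of $N$ and $B$, listed in the same order as the $N$-terms in (\ref{eq1}). For each $\tau\in S_{3}$, a direct bookkeeping check expresses the $\tau$-equation as a row of coefficients in the basis $(N_{\alpha\mu\nu},\ldots,N_{\mu\alpha\nu})$, and assembling these six rows reproduces the matrix $A$ displayed in (\ref{A}). The Latin-square pattern visible in (\ref{A}) is no coincidence: it is precisely the matrix of the left regular representation of the group-algebra element $\sum_{\sigma\in S_{3}}a_{\sigma}[\sigma]\in\mathbb{R}[S_{3}]$, since relabelling the free indices corresponds to one-sided multiplication in $\mathbb{R}[S_{3}]$.

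Once $A\vec{N}=\vec{B}$ is in hand, the non-degeneracy hypothesis $\det A\neq 0$ makes $A$ invertible, so $\vec{N}=A^{-1}\vec{B}$ exists and is unique; extracting the first entry yields precisely (\ref{co}) with $\tilde{a}_{1i}=(A^{-1})_{1i}$, and uniqueness of $N_{\alpha\mu\nu}$ follows at once from invertibility. The main---and really the only---non-trivial step is the combinatorial bookkeeping in the construction of $A$: for each $\tau\in S_{3}$, one must track exactly how applying $\tau$ to the indices permutes the six $N$-terms and the six coefficients $a_{i}$, and verify that the resulting $6\times 6$ array matches (\ref{A}) entry by entry. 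The group-algebra viewpoint above reduces this verification to a routine identification once an ordering of $S_{3}$ consistent with the listing of terms in (\ref{eq1}) has been fixed.
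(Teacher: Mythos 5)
Your proposal is correct and follows essentially the same route as the paper: relabel the free indices by the five non-trivial permutations to obtain a $6\times 6$ linear system $A\mathcal{N}=\mathcal{B}$, use $\det A\neq 0$ to invert, and read off the first row of $A^{-1}$, which immediately gives both existence and uniqueness. The group-algebra remark (that $A$ is the left regular representation of $\sum_{\sigma}a_{\sigma}[\sigma]\in\mathbb{R}[S_{3}]$) is a nice conceptual gloss on the Latin-square structure but plays no essential role beyond the explicit bookkeeping the paper carries out directly.
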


\begin{proof}
	Starting from ($\ref{eq1}$) we perform the $5$ independent possible permutations on the indices and including also $($\ref{eq1}$)$ we end up with the system
		\begin{gather}
	a_{1}N_{\alpha\mu\nu}+a_{2}N_{\nu\alpha\mu}+a_{3}N_{\mu\nu\alpha}+a_{4}N_{\alpha\nu\mu}+a_{5}N_{\nu\mu\alpha}+a_{6}N_{\mu\alpha\nu}=B_{\alpha\mu\nu}  \nonumber \\
	a_{1}N_{\nu\alpha\mu}+a_{2}N_{\mu\nu\alpha}+a_{3}N_{\alpha\mu\nu}+a_{4}N_{\mu\alpha\nu}+a_{5}N_{\alpha\nu\mu}+a_{6}N_{\nu\mu\alpha}=B_{\nu\alpha\mu} \\ \nonumber
		a_{1}N_{\mu\nu\alpha}+a_{2}N_{\alpha\mu\nu}+a_{3}N_{\nu\alpha\mu}+a_{4}N_{\nu\mu\alpha}+a_{5}N_{\mu\alpha\nu}+a_{6}N_{\alpha\nu\mu}=B_{\mu\nu\alpha}  \\  \nonumber
		a_{1}N_{\alpha\nu\mu}+a_{2}N_{\mu\alpha\nu}+a_{3}N_{\nu\mu\alpha}+a_{4}N_{\alpha\mu\nu}+a_{5}N_{\mu\nu\alpha}+a_{6}N_{\nu\alpha\mu}=B_{\alpha\nu\mu} \\ \nonumber
		a_{1}N_{\nu\mu\alpha}+a_{2}N_{\alpha\nu\mu}+a_{3}N_{\mu\alpha\nu}+a_{4}N_{\nu\alpha\mu}+a_{5}N_{\alpha\mu\nu}+a_{6}N_{\mu\nu\alpha}=B_{\nu\mu\alpha} \\ \nonumber
		a_{1}N_{\mu\alpha\nu}+a_{2}N_{\nu\mu\alpha}+a_{3}N_{\alpha\nu\mu}+a_{4}N_{\mu\nu\alpha}+a_{5}N_{\nu\alpha\mu}+a_{6}N_{\alpha\mu\nu}=B_{\mu\alpha\nu} 
	\end{gather}

	Then, defining the  matrix
 \begin{equation}
A := 
\begin{pmatrix}
a_{1} & a_{2} & a_{3} & a_{4} & a_{5} & a_{6} \\
a_{2} & a_{3} &  a_{1} & a_{5} & a_{6} & a_{4} \\
a_{3} & a_{1} &  a_{2} & a_{6} & a_{4} & a_{5} \\
a_{6} & a_{5} &  a_{4} & a_{3} & a_{2} & a_{1} \\
a_{5} & a_{4} &  a_{6} & a_{2} & a_{1} & a_{3} \\
a_{4} & a_{6} &  a_{5} & a_{1} & a_{3} & a_{2} \\
\end{pmatrix} 
\end{equation}
	along with the columns
	 \beq
	 \mathcal{N}:=(N_{\alpha\mu\nu},N_{\nu\alpha\mu},N_{\mu\nu\alpha},N_{\alpha\nu\mu},N_{\nu\mu\alpha},N_{\mu\alpha\nu})^{T}
	 \eeq
	  and 
	  \beq
	  \mathcal{B}:=(B_{\alpha\mu\nu},B_{\nu\alpha\mu},B_{\mu\nu\alpha},B_{\alpha\nu\mu},B_{\nu\mu\alpha},B_{\mu\alpha\nu})^{T}
	  \eeq
	we may express the above system in matrix form as
	\beq
	A \mathcal{N}=\mathcal{B}
	\eeq 
	In the above $\mathcal{N}$ is the column consisting of the unknown elements we wish to find. Since, by hypothesis, we have a non-degenerate system it follows that $det(A)\neq 0$ and as a result the inverse $A^{-1}$ exists. We then, formally multiply the above equation by $A^{-1}$ from the left to get
	\beq
		 \mathcal{N}=A^{-1} \mathcal{B}
	\eeq
	The above is a column equation and of course each element on the left column must be equal to each element on the right. Equating the first element we arrive at the stated result 
		\beq
	N_{\alpha\mu\nu}=\tilde{a}_{11}	B_{\alpha\mu\nu}+\tilde{a}_{12}B_{\nu\alpha\mu}+\tilde{a}_{13}B_{\mu\nu\alpha}+\tilde{a}_{14}B_{\alpha\nu\mu}
	+\tilde{a}_{15} B_{\nu\mu\alpha}+\tilde{a}_{16} B_{\mu\alpha\nu} \label{co}
	\eeq
	where the $\tilde{a}_{1i}$ $  's$ are the elements of the first row of the inverse matrix $A^{-1}$ which, of course, depend on $a_{1}, a_{2},...,a_{6}$. Note that the equations we get for the rest of the column elements will be related to the above one with cyclic permutations and will therefore give nothing new. Concluding, $(\ref{co})$ is the general solution of ($\ref{eq1}$).
	Some comments are now in order.

	\textbf{Comment 1.}  Note that if $B_{\alpha\mu\nu}=0$ and the matrix $A$ is non-singular we have that $N_{\alpha\mu\nu}=0$ as a unique solution. It should be emphasized that the demand that $det(A)\neq 0$ is all essential in order for the full $N_{\alpha\mu\nu}$ tensor field to be vanishing. If the last requirement is not fulfilled the full $N$ tensor may as well not be identically vanishing since in this case not the full $N$ but certain (anti)-symmetrizations of it appear in $(\ref{eq1})$. In such an occasion only certain parts of $N_{\alpha\mu\nu}$ will be vanishing.
	
		\textbf{Comment 2.} If the components  $N_{\alpha\mu\nu}$ are symmetric or antisymmetric in any pair of indices then the system is greatly simplified and the $6\times 6$ matrix $A$ is reduced to a $3\times 3$ matrix instead.\footnote{This is easily realized as follows. Without loss of generality let us suppose that $N$ is symmetric in its first two indices, i.e. $N_{\alpha\mu\nu}=N_{\mu\alpha\nu}$. Then with this relation and circle permutations of it is is trivial to see that only three combinations of $N$ appear in $(\ref{eq1})$ and as a result the system reduces to a $3 \times 3$. Of course same goes also when $N$ is antisymmetric in any pair of its indices.}
	
	\end{proof}
		\begin{theo}
	In a $d-$dimensional space,	consider the tensor equation (with $n \le d$)
		\beq
\underbrace{	a_{1}N_{\mu_{1}\mu_{2}\mu_{3}...\mu_{n}}+a_{2}N_{\mu_{n}\mu_{1}\mu_{2}...\mu_{n-1}}+...+a_{n}N_{\mu_{2},\mu_{3},...\mu_{n}\mu_{1}}+....}_{n!-terms}=B_{\mu_{1}\mu_{2}\mu_{3}...\mu_{n}} \label{eq2}
	\eeq
		where  $a_{i}$, $i=1,2,...,n$ are scalars, $B_{\mu_{1}\mu_{2}...\mu_{n}} $   are the components of a  given (known) tensor and $N_{\mu_{1}\mu_{2}\mu_{3}...\mu_{n}}$ are the components of the unknown tensor $N$ of rank $n$.
		Define 
		 the square $n!\times n!$ matrix
		\begin{equation}
		A = 
		\begin{pmatrix}
		a_{1} & a_{2} & \cdots & a_{n!} \\
		a_{2} & a_{n} & \cdots & a_{n+1} \\
		\vdots  & \vdots  & \ddots & \vdots  \\
		a_{n+1} & a_{n!} & \cdots & a_{2} 
		\end{pmatrix}
		\end{equation}
		 Given that the system is non-degenerate, that is 
		$\det{A}\neq 0$, then the general and unique solution of ($\ref{eq2}$) is given by
		\beq
	N_{\mu_{1}\mu_{2}\mu_{3}...\mu_{n}}=\tilde{a}_{11}	B_{\mu_{1}\mu_{2}\mu_{3}...\mu_{n}}+...+\tilde{a}_{1n}B_{\mu_{2}\mu_{1}\mu_{3}...\mu_{n}}
	\eeq 
	where the $\tilde{a}_{1i}'s$ are the first row elements of the inverse matrix $A^{-1}$.
	\end{theo}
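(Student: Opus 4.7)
The plan is to imitate the proof of Theorem~1 verbatim, now carried out for the symmetric group $S_{n}$ in place of $S_{3}$. First I would apply to equation (\ref{eq2}) each of the $n!-1$ nontrivial permutations $\tau\in S_{n}$ of the free indices $\mu_{1},\ldots,\mu_{n}$; together with the original equation this produces a system of $n!$ scalar equations, in which exactly the same $n!$ permuted components $N_{\mu_{\pi(1)}\ldots\mu_{\pi(n)}}$, for $\pi\in S_{n}$, appear on the left-hand side, with corresponding permuted components of $B$ on the right.

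Next, I would arrange the $n!$ permuted copies of $N$ into a column vector $\mathcal{N}$, and similarly for $\mathcal{B}$, using a fixed enumeration of $S_{n}$ (the one dictating the row and column ordering of $A$). The system then takes the compact matrix form $A\,\mathcal{N}=\mathcal{B}$, with $A$ the square $n!\times n!$ matrix of the theorem statement. Invoking the hypothesis $\det A\neq 0$ gives $\mathcal{N}=A^{-1}\mathcal{B}$, and equating the first entries of the two columns yields the stated expression for $N_{\mu_{1}\mu_{2}\ldots\mu_{n}}$, with the coefficients $\tilde{a}_{1i}$ being the first-row entries of $A^{-1}$. The remaining $n!-1$ row-equations then supply nothing new, being merely permuted restatements of the same identity, exactly as in Theorem~1.

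The hard part is the bookkeeping: one must verify that the coefficient matrix generated by this permutation procedure really coincides with the matrix $A$ displayed in the theorem. Concretely, if we re-index the scalars as $\{a_{\sigma}\}_{\sigma\in S_{n}}$ and write the original equation as $\sum_{\sigma\in S_{n}} a_{\sigma}\, N_{\mu_{\sigma(1)}\ldots\mu_{\sigma(n)}}=B_{\mu_{1}\ldots\mu_{n}}$, then relabeling the free indices by $\tau$ gives an equation in which the coefficient of $N_{\mu_{\pi(1)}\ldots\mu_{\pi(n)}}$ is $a_{\pi\tau^{-1}}$. Hence the $(\tau,\pi)$ entry of the coefficient matrix depends only on $\pi\tau^{-1}$; this is precisely the structure of the left regular representation of the group algebra of $S_{n}$, and it reproduces the cyclic block pattern already visible in the rank-$3$ matrix (\ref{A}). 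Once this identification is made, the rest is the routine invert-and-read-off step of Theorem~1, so no genuinely new technical ingredient enters. A subsidiary but painless check is that the matrix so obtained is indeed square of size $n!$, which follows because the orbit of the $n$-tuple $(\mu_{1},\ldots,\mu_{n})$ under $S_{n}$ (acting on positions) has cardinality $n!$ whenever the indices are distinct, so no two of the $n!$ permuted components collapse to the same object before indices are contracted.
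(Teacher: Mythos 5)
Your proposal follows essentially the same route as the paper: permute the free indices to generate the $n!$ equations, write the system as $A\,\mathcal{N}=\mathcal{B}$, invert $A$ using the non-degeneracy hypothesis, and read off the first row of $\mathcal{N}=A^{-1}\mathcal{B}$. Your explicit identification of the coefficient matrix with the left regular representation of $S_{n}$ (entries depending only on $\pi\tau^{-1}$) is a more careful justification of the matrix structure than the paper, which simply displays $A$, but it is the same argument.
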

	
	\begin{proof}
	In an identical manner to the proof of Theorem $1$ we now start from ($\ref{eq2}$) and perform the $(n!-1)$ possible independent permutations to end up with the system of $n!$ equations\footnote{The first one is eq. ($\ref{eq2}$) itself.}
			\begin{gather}
		a_{1}N_{\mu_{1}\mu_{2}\mu_{3}...\mu_{n}}+a_{2}N_{\mu_{n}\mu_{1}\mu_{2}...\mu_{n-1}}+...+a_{n}N_{\mu_{2},\mu_{3},...\mu_{n}\mu_{1}}+....=B_{\mu_{1}\mu_{2}\mu_{3}...\mu_{n}}
\nonumber	\\
		a_{1}N_{\mu_{2}\mu_{3}\mu_{4}...\mu_{1}}+a_{2}N_{\mu_{1}\mu_{2}\mu_{3}...\mu_{n}}+...+a_{n}N_{\mu_{3}\mu_{4}...\mu_{1}\mu_{2}}+....=B_{\mu_{2}\mu_{3}\mu_{4}...\mu_{1}} \label{eq3}\nonumber \\
		... \\ \nonumber
				... \\  \nonumber
			...
					 \\ \nonumber
					 ... 
					 \nonumber \\
					(all-possible-permutations)
		\end{gather}
		
		We then define the square $n!\times n!$ matrix
		\begin{equation}
	A = 
	\begin{pmatrix}
	a_{1} & a_{2} & \cdots & a_{n!} \\
	a_{2} & a_{n} & \cdots & a_{n+1} \\
	\vdots  & \vdots  & \ddots & \vdots  \\
	a_{n+1} & a_{n!} & \cdots & a_{2} 
	\end{pmatrix}
	\end{equation}
		and the columns
		\beq
	\mathcal{N}:=(N_{\mu_{1}\mu_{2}\mu_{3}...\mu_{n}},N_{\mu_{n}\mu_{1}\mu_{2}...\mu_{n-1}},....)^T
		\eeq
		\beq
		\mathcal{B}:=(B_{\mu_{1}\mu_{2}\mu_{3}...\mu_{n}},B_{\mu_{n}\mu_{1}\mu_{2}...\mu_{n-1}},....)^T
		\eeq
		we may express the above system in the matrix form
		\beq
		A \mathcal{N}=\mathcal{B}
		\eeq 
	As in Theorem $1$ we then formally multiply the above equation by $A^{-1}$ from the left to get
		\beq
		\mathcal{N}=A^{-1} \mathcal{B}
		\eeq	
	and by equating the first row  element of the left and right hand sides of the above we arrive at the stated result
	\beq
	N_{\mu_{1}\mu_{2}\mu_{3}...\mu_{n}}=\tilde{a}_{11}	B_{\mu_{1}\mu_{2}\mu_{3}...\mu_{n}}+\tilde{a}_{12} B_{\mu_{n}\mu_{1}...\mu_{n-1}}...+\tilde{a}_{1n}B_{\mu_{2}\mu_{1}\mu_{3}...\mu_{n}}
	\eeq 
		where $\tilde{a}_{11},\tilde{a}_{11},...,\tilde{a}_{1n} $ are the  elements of the first row of the inverse matrix $A^{-1}$. 
		
\end{proof}
	\textbf{Remark.} Again, if the tensor $N$ has some symmetry property over some pair(s) of its indices the dimension of the matrix $A$ will be lowered accordingly.

	Now, going back to the case of a rank $3$ tensor one may ask how does the situation change when the traces of $N_{\alpha\mu\nu}$ also appear in $(\ref{eq1})$. Defining the three traces
	\beq
	N^{(1)}_{\mu}:=N_{\alpha\beta\mu}g^{\alpha\beta}\;\;, \;\; N^{(2)}_{\mu}:=N_{\alpha\mu\beta}g^{\alpha\beta}\;\;, \;\; N^{(3)}_{\mu}:=N_{\mu\alpha\beta}g^{\alpha\beta}
	\eeq
	the generalized version of ($\ref{eq1}$), still linear in $N$, including the above traces reads
		\beq
	a_{1}N_{\alpha\mu\nu}+a_{2}N_{\nu\alpha\mu}+a_{3}N_{\mu\nu\alpha}+a_{4}N_{\alpha\nu\mu}+a_{5}N_{\nu\mu\alpha}+a_{6}N_{\mu\alpha\nu}+\sum_{i=1}^{3}\Big( a_{7i}N^{(i)}_{\mu}g_{\alpha\nu}+a_{8i}N^{(i)}_{\nu}g_{\alpha\mu}+a_{9i}N^{(i)}_{\alpha}g_{\mu\nu} \Big)=B_{\alpha\mu\nu} \label{eq4}
	\eeq
	As we show below the appearance of the these extra terms does not introduce any serious technical difficulty and one can always solve for $N_{\alpha\mu\nu}$ in terms of a modified version of $B_{\alpha\mu\nu}$ the includes its traces. We have the following result.

	\begin{theo}
	Consider the $15$ parameter linear tensor equation ($\ref{eq4}$) where $N_{\alpha\mu\nu}$ are the components of the unknown tensor field. Define the matrices\footnote{The elements $\gamma_{ij}$ are linear combinations of the parameters $a_{i}$ and their exact relations are given in the appendix.}
	\begin{equation}
	\Gamma := 
	\begin{pmatrix}
	\gamma_{11} & \gamma_{12} & \gamma_{13}  \\
	\gamma_{21} & \gamma_{22} &  \gamma_{23} \\
	\gamma_{31} & \gamma_{32} &  \gamma_{33}  \\
	\end{pmatrix}
	\end{equation}
	and
	\begin{equation}
	A := 
	\begin{pmatrix}
	a_{1} & a_{2} & a_{3} & a_{4} & a_{5} & a_{6} \\
	a_{3} & a_{1} &  a_{2} & a_{5} & a_{6} & a_{4} \\
	a_{2} & a_{3} &  a_{1} & a_{6} & a_{4} & a_{5} \\
	a_{4} & a_{6} &  a_{5} & a_{1} & a_{3} & a_{2} \\
	a_{5} & a_{4} &  a_{6} & a_{2} & a_{1} & a_{3} \\
	a_{6} & a_{5} &  a_{4} & a_{3} & a_{2} & a_{1} \\
	\end{pmatrix}
	\end{equation}
	Then, given that both of the above matrices are non-singular, the unique  solution to $(\ref{eq4})$
	reads
		\beq
	N_{\alpha\mu\nu}=\tilde{a}_{11}	B_{\alpha\mu\nu}+\tilde{a}_{12}\hat{B}_{\nu\alpha\mu}+\tilde{a}_{13}\hat{B}_{\mu\nu\alpha}+\tilde{a}_{14}\hat{B}_{\alpha\nu\mu}
	+\tilde{a}_{15} \hat{B}_{\nu\mu\alpha}+\tilde{a}_{16} \hat{B}_{\mu\alpha\nu} \label{co2}
	\eeq
		where 
	\beq
	\hat{B}_{\alpha\mu\nu}={B}_{\alpha\mu\nu}-\sum_{i=1}^{3}\sum_{j=1}^{3}\Big( a_{7i}\tilde{\gamma}_{ij}B^{(j)}_{\mu}g_{\alpha\nu}+a_{8i}\tilde{\gamma}_{ij}B^{(j)}_{\nu}g_{\alpha\mu}+a_{9i}\tilde{\gamma}_{ij}B^{(j)}_{\alpha}g_{\mu\nu}\Big) \label{Bhat}
	\eeq
	
	\end{theo}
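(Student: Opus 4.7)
The plan is to decouple the trace terms from the main equation, solve for them separately, and then reduce the problem to the one already treated in Theorem 1. First, I would take the three independent metric contractions of (\ref{eq4})—namely with $g^{\alpha\mu}$, $g^{\alpha\nu}$ and $g^{\mu\nu}$—to obtain three vector equations in the unknowns $N^{(1)}_\mu,\,N^{(2)}_\mu,\,N^{(3)}_\mu$. The contraction of each of the six permuted pieces $N_{\alpha\mu\nu},\,N_{\nu\alpha\mu},\ldots,N_{\mu\alpha\nu}$ reduces to one of the three trace vectors (with an appropriate index relabeling), while the contractions of the $a_{7i},a_{8i},a_{9i}$ terms generate both $N^{(i)}_\mu$ and $d\,N^{(i)}_\mu$ contributions through factors of $g_{\alpha\beta}g^{\alpha\beta}=d$. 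Collecting coefficients yields a linear $3\times 3$ vector system of the form $\Gamma\,\vec{\mathcal{N}}=\vec{\mathcal{B}}$, where $\vec{\mathcal{N}}:=(N^{(1)}_\mu,N^{(2)}_\mu,N^{(3)}_\mu)^{T}$ and $\vec{\mathcal{B}}$ is the analogous triple formed from the three independent traces of $B_{\alpha\mu\nu}$.

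Second, under the stated assumption $\det\Gamma\neq 0$, this system has the unique solution
\[
N^{(i)}_\mu=\sum_{j=1}^{3}\tilde{\gamma}_{ij}\,B^{(j)}_\mu,
\]
with $\tilde{\gamma}_{ij}$ the entries of $\Gamma^{-1}$. I would then substitute this back into (\ref{eq4}) and transfer all of the trace terms to the right-hand side. The resulting equation is exactly of the form already treated in Theorem 1,
\[
a_1 N_{\alpha\mu\nu}+a_2 N_{\nu\alpha\mu}+\cdots+a_6 N_{\mu\alpha\nu}=\hat{B}_{\alpha\mu\nu},
\]
with the modified source $\hat{B}_{\alpha\mu\nu}$ given precisely by the expression (\ref{Bhat}).

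Third, with the equation now reduced to the purely rank-$3$ form without trace couplings, I apply Theorem 1 directly to the source $\hat{B}$. Since the matrix $A$ of Theorem 3 is assumed non-singular, its inverse exists and reading off the first row of $A^{-1}\hat{\mathcal{B}}$ produces the stated solution (\ref{co2}). The remaining rows of that column equation again deliver only cyclic relabelings of the same identity, so no additional independent information appears.

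The main technical obstacle lies entirely in the first step: one must carefully bookkeep which of the three vectors $N^{(1)},N^{(2)},N^{(3)}$ arises from each contraction of each of the six permutations of $N$, and combine these contributions with the dimension-dependent pieces coming from contracting the explicit $g_{\cdot\cdot}$ factors, in order to express the entries $\gamma_{ij}$ explicitly in terms of the $a_i,a_{7i},a_{8i},a_{9i}$ and the spacetime dimension $d$. This step is lengthy but mechanical and, as the statement indicates, is relegated to the appendix. Conceptually, the only new ingredient compared with Theorem 1 is that non-degeneracy is now a compound condition: $\det\Gamma\neq 0$ guarantees uniqueness of the traces, while $\det A\neq 0$ guarantees uniqueness of $N_{\alpha\mu\nu}$ once those traces are known.
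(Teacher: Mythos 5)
Your proposal follows essentially the same route as the paper's own proof: contract (\ref{eq4}) with $g^{\alpha\mu}$, $g^{\alpha\nu}$, $g^{\mu\nu}$ to obtain the $3\times 3$ system $\Gamma\eta=b$ for the traces, invert it using $\det\Gamma\neq 0$, substitute back to eliminate the trace terms in favour of $\hat{B}_{\alpha\mu\nu}$, and then invoke Theorem 1 with $\hat{B}$ as the source. The argument is correct, including your observation that non-degeneracy splits into the two separate conditions on $\Gamma$ and $A$.
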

	\begin{proof}
		We begin by tracing equation ($\ref{eq4}$) three times independently with $g^{\alpha\mu}$, $g^{\alpha\nu}$ and $g^{\mu\nu}$ to get
		\begin{gather}
	\sum_{i=1}^{3}	\gamma_{1i}N_{\mu}^{(i)}=B_{\mu}^{(1)}\;\;, \;\;\sum_{i=1}^{3}	\gamma_{2i}N_{\mu}^{(i)}=B_{\mu}^{(2)} \;\;, \;\; \sum_{i=1}^{3}	\gamma_{3i}N_{\mu}^{(i)}=B_{\mu}^{(3)}
		\end{gather}

		where we have renamed all free indices to $\mu$ and the $\gamma_{ij}'s$ are some shorthand notations for certain linear combinations of the $a_{i}'s$ whose relations are given in the appendix. Now defining  the matrix $\Gamma$ with coefficients the $\gamma_{ij}'s$ along with the columns $\eta=(N^{(1)}_{\mu},N^{(2)}_{\mu},N^{(3)}_{\mu})^{T}$ and $b=(B^{(1)}_{\mu},B^{(2)}_{\mu},B^{(3)}_{\mu})^{T}$ we may express the above system in matrix form as
		\beq
		\Gamma n=b
		\eeq
		By hypothesis, the matrix $\Gamma$ is non-singular (i.e. $det(\Gamma)\neq0$) and therefore the inverse $\Gamma^{-1}$ exists and we may formally solve for $\eta$ as
		\beq
		\eta=\Gamma^{-1}b
		\eeq
		which in component notation translates to
		\beq
		N^{(i)}_{\mu}=\sum_{j=1}^{3}\tilde{\gamma}_{ij}B^{(i)}_{\mu}
		\eeq
		with the $\tilde{\gamma}_{ij}'s$ being the elements of $\Gamma^{-1}$.
		Then, substituting these last relations back in ($\ref{eq4}$), we fully eliminate the $N$ traces in favour of the traces of $B$, ending up with
			\beq
		a_{1}N_{\alpha\mu\nu}+a_{2}N_{\nu\alpha\mu}+a_{3}N_{\mu\nu\alpha}+a_{4}N_{\alpha\nu\mu}+a_{5}N_{\nu\mu\alpha}+a_{6}N_{\mu\alpha\nu}=\hat{B}_{\alpha\mu\nu} \label{eq5}
		\eeq
		where 
		\beq
		\hat{B}_{\alpha\mu\nu}={B}_{\alpha\mu\nu}-\sum_{i=1}^{3}\sum_{j=1}^{3}\Big( a_{7i}\tilde{\gamma}_{ij}B^{(j)}_{\mu}g_{\alpha\nu}+a_{8i}\tilde{\gamma}_{ij}B^{(j)}_{\nu}g_{\alpha\mu}+a_{9i}\tilde{\gamma}_{ij}B^{(j)}_{\alpha}g_{\mu\nu}\Big) \label{Bhat}
		\eeq
	We are pretty much done now since we can apply the result of Theorem $1$ to the modified tensor field $\hat{B}_{\alpha\mu\nu}$ in place of $B_{\alpha\mu\nu}$, completing therefore the proof
	\beq
	N_{\alpha\mu\nu}=\tilde{a}_{11}	\hat{B}_{\alpha\mu\nu}+\tilde{a}_{12}\hat{B}_{\nu\alpha\mu}+\tilde{a}_{13}\hat{B}_{\mu\nu\alpha}+\tilde{a}_{14}\hat{B}_{\alpha\nu\mu}
	+\tilde{a}_{15} \hat{B}_{\nu\mu\alpha}+\tilde{a}_{16} \hat{B}_{\mu\alpha\nu} 
	\eeq
where the modified components $\hat{B}_{\alpha\mu\nu}$ are given by ($\ref{Bhat}$).
		
		\end{proof}
	\section{Conclusions}
	
	We have formulated an analytical method that allows one to solve  tensorial equations of the form $(\ref{eq10})$, for the unknown tensor components $N_{\alpha\mu\nu}$ of the rank-3 tensor $N$. In particular we proved that under a fairly general non-degeneracy condition (i.e. $\det(A)\neq 0$) one can always solve equations of the form $(\ref{eq10})$ by simply finding the inverse of the matrix $A$ which is built from the coefficients $a_{i}$ appearing in the same equation. Subsequently, we generalized our result for arbitrary rank tensors and similarly obtained the solution of $\ref{eq2}$. Finally we extended the result we obtained for the first case (equation ($\ref{eq10}$)) to the $15$ parameter linear tensor equation ($\ref{eq4}$) including also the traces of $N_{\alpha\mu\nu}$ and obtained the unique solution for this case as well.

	As we already mentioned in the introduction, these results find a natural application in geometric extensions of General Relativity that take into account the non-Riemannian structure of spacetime (torsion and non-metricity). In this context of Metric-Affine Theories of Gravitation, equations of the form $(\ref{eq10})$ or more generally $(\ref{eq4})$ relate the distortion tensor with the hypermomentum (source). Therefore, the technique we developed here will be proven to be essential for finding how the matter sources produce spacetime torsion and non-metricity. This last  point is under consideration now.
	
	\section{Acknowledgments}	This research is co-financed by Greece and the European Union (European Social Fund- ESF) through the
	Operational Programme 'Human Resources Development, Education and Lifelong Learning' in the context
	of the project “Reinforcement of Postdoctoral Researchers - 2
	nd Cycle” (MIS-5033021), implemented by the
	State Scholarships Foundation (IKY).

		\appendix
		
		\section{The $\gamma's$}
		The relations between the elements of $\Gamma$ and the parameters $a_{i}$ read
		\begin{gather}
		\gamma_{11}=a_{1}+a_{3}+a_{71}+n a_{81}+a_{91}\;\;, \;\; \gamma_{12}=a_{2}+a_{4}+a_{72}+n a_{82}+ a_{92}\;\;, \;\; \gamma_{13}=a_{5}+a_{6}+a_{73}+n a_{83}+a_{93} \nonumber \\
			\gamma_{21}=a_{2}+a_{5}+n a_{71}+ a_{81}+a_{91}\;\;, \;\; \gamma_{22}=a_{1}+a_{6}+n a_{72}+a_{82}+ a_{92}\;\;, \;\; \gamma_{23}=a_{3}+a_{4}+n a_{73}+ a_{83}+a_{93} \nonumber \\		
			\gamma_{31}=a_{5}+a_{6}+a_{71}+ a_{81}+ n a_{91}\;\;, \;\; \gamma_{32}=a_{3}+a_{4}+a_{72}+ a_{82}+n a_{92}\;\;, \;\; \gamma_{31}=a_{1}+a_{2}+a_{73}+ a_{83}+n a_{93} \nonumber
		\end{gather}
		
		\section{The determinant of $A$}
		For the matrix $A$ as given by ($\ref{A}$), after some factorizations, its determinant is found to be (the use of Wolfram Mathematica \cite{wolfram1991mathematica} makes things easier here)
		\beq
		det(A)=\sigma_{1}\sigma_{2}\sigma_{3}^{2}
		\eeq
		with
		\beq
		\sigma_{1}=\sum_{i=1}^{6}a_{i}
		\eeq
		\beq
		\sigma_{2}=\sum_{i=1}^{3}(a_{i}-a_{i+3})
		\eeq
		\beq
		\sigma_{3}=\sum_{i=1}^{3}(a^{2}_{i}-a^{2}_{i+3})- \sum_{i=1}^{3}\sum_{j=1, \; j>i }^{3}(a_{i}a_{j}-a_{i+3}a_{j+3})
		\eeq
		Note that the determinant is of $6^{th}$ order on the $a_{i}'s$ as expected. Now, in order for the matrix to be non-singular all four sums above must be non-zero at the same time. If one (or more) of those sums is zero, this  implies that in ($\ref{eq1}$) a certain (anti) symmetrization occurs in $N_{\alpha\mu\nu}$ which in turn means that the latter equation can only give certain parts of $N_{\alpha\mu\nu}$ and not the full tensor. We see therefore that non-degeneracy is essential in order to obtain the full tensor N. Below we give a trivial example where such a degeneracy occurs.
		\section{Examples}
		\textbf{Example 1.}

		Consider the equation
		\beq
		N_{\alpha\mu\nu}-N_{\alpha\mu\nu}=B_{\alpha\mu\nu}
		\eeq
		Here we have $a_{1}=-a_{3}=1$ and $a_{2}=a_{4}=a_{5}=a_{6}=0$ and as a result $\sigma_{1}=0$ meaning that the matrix $A$ is singular and as a result equation ($\ref{eq1}$)  is not enough to specify all the components $N_{\alpha\mu\nu}$.
		Of course in this example the incapability of ($\ref{eq1}$) to fully  specify all the components $N$ of was obvious since we could write the above as $2 N_{[\alpha\mu]\nu}=B_{\alpha\mu\nu}$ meaning that only the antisymmetric part in the first indices of $N$ can be obtained. In more complicated cases, however, it would be quite difficult to spot certain symmetrizations that might occur, especially for the generalized version ($\ref{eq2}$). In these cases the determinant criterion would be of great use in determining whether the given tensor equation can give the components of the full $N$ tensor or not.
		 
				\textbf{Example 2.} Let us now apply the result of our first Theorem in a trivial example one encounters in introductory courses of tensor calculus. There, the metric compatibility condition implies
				\beq
			\Gamma_{\nu\mu\alpha}+	\Gamma_{\mu\nu\alpha}= \partial_{\alpha}g_{\mu\nu}	\;\;, \;\; where \;\;	\Gamma_{\nu\mu\alpha}:=g_{\lambda\nu}\Gamma^{\lambda}_{\;\;\;\mu\nu} \label{LC}
				\eeq
		which along with the torsionlessness of the connection $\Gamma^{\lambda}_{\;\;\;\mu\nu}=\Gamma^{\lambda}_{\;\;\;\nu\mu}$ give us the usual Levi-Civita form of the connection. Recall that the trick to solve for $\Gamma^{\lambda}_{\;\;\;\mu\nu}$ there was to consider two subsequent cyclic permutations of the ($\ref{LC}$) and subtract them from the latter. Let us reproduce this result here by applying the Theorem $1$. In this case, as we have already mentioned the fact that $	\Gamma_{\nu\mu\alpha}$ is symmetric in its last two indices, reduces $A$ to a $3 \times 3$ matrix and we might as well set $a_{4}=a_{5}=a_{6}=0$. Now from $(\ref{LC})$ we read off the coefficients $a_{1}=0$, $a_{2}=a_{3}=1$ and as a result
			\begin{equation}
		A = 
		\begin{pmatrix}
		0 & 1 & 1 \\
		1 & 1 & 0 \\
		1 & 0 & 1  
		\end{pmatrix}
		\end{equation}
		From which we see that $\det(A)=-2 \neq 0$ and we straightforwardly calculate the first row elements of the inverse matrix to be $\tilde{a}_{11}=-1/2$, $\tilde{a}_{12}=\tilde{a}_{13}=1/2$. Then substituting these into ($\ref{co}$) and with the identifications\footnote{Recall that our result holds true not only for tensor but also for tensor densities and connection coefficients as well.} $N_{\alpha\mu\nu}=\Gamma_{\alpha\mu\nu}$ and $B_{\alpha\mu\nu}$
		we arrive at the well known result
		\beq
		\Gamma_{\alpha\mu\nu}=\frac{1}{2}\Big( -\partial_{\alpha}g_{\mu\nu}+\partial_{\nu}g_{\mu\alpha}+\partial_{\mu}g_{\nu\alpha} \Big)
		\eeq
		for the Levi-Civita connection. Of course in this case the same result can be obtained trivially by the classical method we mentioned above. Our intention with these examples here is to illustrate how our general method works. Probably the most useful application of our Theorem to physical systems is the analysis of the connection field equations in Metric-Affine Gravity. There our method will be proven to be all essential in solving for torsion and non-metricity in terms of their sources.

		\bibliographystyle{unsrt}
		\bibliography{ref}

	\end{document}